\newtheorem{theorem}{Theorem}[section]
\newtheorem{proposition}[theorem]{Proposition}
\newtheorem{lemma}[theorem]{Lemma}
\newtheorem{definition}[theorem]{Definition}
\newenvironment{proof}[1][Proof]{\begin{trivlist}
\item[\hskip \labelsep {\bfseries #1}]}{\end{trivlist}}
\newenvironment{remark}[1][Remark]{\begin{trivlist}
\item[\hskip \labelsep {\bfseries #1}]}{\end{trivlist}}
\newcommand{\qed}{\nobreak \ifvmode \relax \else
	\ifdim\lastskip<1.5em \hskip- \lastskip
	\hskip 0.5em plus0em minus0.5em \fi \nobreak
	\vrule height0.75em width0.5em depth0.25em\fi}
\begin{document}

\title{On the existence of conformal Killing horizons in LRS spacetimes}

\author{Abbas M \surname{Sherif}}
\email{abbasmsherif25@gmail.com}
\affiliation{Department of Science Education, Jeju National University, Jeju, 63243, South Korea}

\begin{abstract}
Let $M$ be a locally rotationally symmetric spacetime, and $\xi^a$ a conformal Killing vector for the metric on $M$, lying in the subspace spanned by the unit timelike direction and the preferred spatial direction, and with non-constant components. Under the assumption that the divergence of $\xi^a$ has no critical point in $M$, we obtain the necessary and sufficient condition for $\xi^a$ to generate a conformal Killing horizon. It is shown that $\xi^a$ generates a conformal Killing horizon if and only if either of the components (which coincide on the horizon) is constant along its orbits. That is, a conformal Killing horizon can be realized as the set of critical points of the variation of the component(s) of the conformal Killing vector along its orbits. Using this result, a simple mechanism is provided by which to determine if an arbitrary vector in an expanding LRS spacetime is a conformal Killing vector that generates a conformal Killing horizon. In specializing the case for which $\xi^a$ is a special conformal Killing vector, provided that the gradient of the divergence of $\xi^a$ is non-null, it is shown that LRS spacetimes cannot admit a special conformal Killing vector field, thereby ruling out conformal Killing horizons generated by such vector fields.
\end{abstract}

\maketitle

\section{Introduction}\label{sec1}

Understanding black holes has taken an integral role in various areas of theoretical physics, including general relativity (GR) and quantum gravity \cite{hawe1,rw1}. Many of their fundamental properties have been uncovered from the boundary (horizon) that demarcates timelike regions of physical observers from the interior of the black hole.  The \textit{event horizon} of static solutions has provided a rich background for the exploration quantum effects of black holes, and many of their properties were largely understood a long time ago for various spacetimes. 

There are however two drawbacks with the notion of event horizons. Firstly, it is of a global character: one requires the full knowledge of the future asymptotic region of the spacetime to locate the horizon. The second issue is of a teleological nature, where the event horizon forms in anticipation of an impending collapse (this phenomenon is exhibited in the Vaidya collapse). This prompted physicists to seek a local definition of a black hole horizon, one which is dependent rather on its local geometry. This line of research was, in earnest, initiated by Hayward \cite{sh1}, who introduced the notion of a \textit{trapping horizon} (TH), which is characterized by the signs of the expansions of null geodesics to foliation surfaces (these are called \textit{marginally outer trapped surfaces} or MOTS for short) and their variations. (A generalization known as a \textit{marginally trapped tube} (MTT) was later introduced by Ashtekar and Galloway \cite{ag1}, where the ingoing null expansion is negative.) To remedy the problem of `\textit{globalness}' of the event horizon, the requirement of the existence of a global timelike vector was dropped in favor of the time independence of the local metric on the horizon. This is the local analogue of an event horizon, an \textit{isolated horizon} (IH), introduced by Ashtekar \textit{et al.} \cite{ak1}. That is, one can have a black hole in equilibrium in a dynamical spacetime, but away from the dynamical regime. In this case, the outgoing null expansion is constant along the null generator of the horizon. This allowed for the quasi-local formulation of the laws of black hole mechanics. A related notion is that of a \textit{dynamical horizon} (DH), which captures a black hole not in equilibrium, but one which grows in area by flux across the horizon \cite{ak2,ak3}. In this case, the outgoing null expansion decreases along both outgoing and ingoing null geodesics (see the references \cite{ak2,ak3,ib1,ib2}), and it is a particular case of a TH. This notion captures the full non-linearity of GR, and one which is expected in realistic gravitational collapse. 

A closely related notion to a DH is a stability characterization of the foliation surfaces of the DH, defined by the sign of the principal eigenvalue of some elliptic operator \cite{lars1}. It is expected that a DH enclosing a trapped region must be foliated by stable MOTS. It was recently discussed, \cite{she1}, that the conditions that the variations of the outgoing null expansion along the outgoing and ingoing null directions being negative is not sufficient, if the DH is to enclose a trapped region, but that their ratio must lie in the interval \((0,1)\). (This explains a result by Senoville \cite{sen1} that a DH may not enclose any trapped surface.) 

A problem with horizons associated with dynamical black holes is that some standard thermodynamic concepts do not directly carry over. For example, the notion of surface gravity is not well defined for a dynamical horizon, and therefore, there is no consensus on an acceptable definition for temperature in these cases. To this end, the notion of a \textit{conformal Killing horizon} has been suggested as the `right' boundary to associate to a black hole in a dynamical setting, provided there exists a conformal map from the dynamical spacetime to a static black hole spacetime \cite{niel1}. These objects are hypersurfaces on which the norm of an irrotational conformal Killing vector in the spacetime vanishes \cite{dy1,dy2}. The suggestion follows from the fact that the Killing horizon of the static black hole maps to the conformal Killing horizon of the dynamical black hole in such a way that, a particular definition of surface gravity, one used to define the Hawking temperature, is conformally invariant \cite{tj1}. 

While there is an extensive literature on horizons foliated by MOTS (for example, this has been extended to binary black hole mergers with the implementation of robust numerical codes, where unexpected behaviors of MOTS, like self-intersections, were observed \cite{ib3,ib4,ib5}), works on conformal Killing horizons is sparse, with relatively little attention being devoted to the topic. Conformal Killing horizons were introduced a long time ago by Dyer and Honig \cite{dy1}. However, it is (relatively) recently that the works by Jacobson and Kang \cite{tj1}, and Sultana and Dyer \cite{dy2} have shed further light on properties of these horizon types. Since then, a few other works have been devoted to studying these objects, albeit in different contexts. 

Chatterjee and Ghosh have recently introduced a Hamiltonian formulation of a conformal Killing horizon as well as the space of GR solutions admitting such objects \cite{gh1}. A rotation version of a conformal Killing horizon, foliated by distorted 2-spheres, was also introduced by the same authors \cite{gh2}, where a differential first law of black hole mechanics was defined on the horizon. A recent work by De Lorenzo and Perez, \cite{pere1}, have located bifurcate conformal Killing horizons in Minkowski spacetime, whose thermodynamic properties mimics those of black hole horizons in dynamical spacetimes. Recent works by Nielsen \cite{niel2} and Nielsen and Shoom \cite{niel1} have located the conformal Killing horizons in the pure Vaidya metric, as well as their thermodynamic property. This work was recently extended to the charged case, \cite{kps}, where the charge is time-dependent, where three conformal Killing horizons were located with two of them recovering the inner at outer horizons of the Reissner-Nordstrom metric in the static limit.

The aim of the current work is to investigate the criteria for the existence of conformal Killing horizon generated by a proper (non-homothetic) Killing vector field, in the class of locally rotationally symmetric (LRS) spacetimes \cite{ge1,ge2}. The appeal to this class of spacetimes stems from the fact that, many exact spherically symmetric dynamical GR solutions fall within this class. These spacetimes admit a multiply transitive isometry group, with a continuous isotropy group at each point of the spacetime. In local coordinates, the line element for the LRS class of spacetimes takes the form

\begin{eqnarray}\label{fork}
ds^2=-A^2dt^2+ B^2dr+ C^2dy^2+\left(\left(DC\right)^2+\left(Bh\right)^2-\left(Ag\right)^2\right)dz^2+2\left(A^2gdt-B^2hdr\right)dz,
\end{eqnarray}

where \(A,B\) and \(C\) are \(t\) and \(r\) dependent, and \(g\) and \(h\) are functions of \(y\) only. \(D\), which is a function of \(y\) is parametrized by a number \(k\in\{-1,0,1\}\): \(k=-1\) corresponds to \(\sinh y\), \(k=0\) corresponds to \(y\), \(k=1\) corresponds to \(\sin y\)). The particular case \(g=0=h\) is the subclass, LRS II spacetimes, which generalizes irrotational spherically symmetric solutions to the Einstein field equations. 

The existence of the preferred spatial direction \(e^a=B^{-1}\partial^a_r\) in the metric \eqref{fork} allows for the further decomposition of the 1+3 field equations of these spacetimes, the basis of the so-called 1+1+2 covariant formulation that will be employed in this work.

This paper has the following structure: In Section \ref{see2}, we briefly introduce the 1+1+2 covariant formalism that is used throughout this work. In Section \ref{ex}, we introduce conformal symmetries and conformal maps between spacetimes, and the notion of a conformal Killing horizon. We then establish the conditions for the existence of conformal Killing horizons in LRS spacetimes, and the constraint this imposes on the components of the conformal Killing vector generating the horizon in Section \ref{exist}. We conclude with discussion of the results and present a possible future research direction in Section \ref{dis}.

\section{The formalism}\label{see2}

There is by now a well established literature on the 1+1+2 covariant formalism, with several references providing details of the steps in the decomposition. A reader interested in more details is referred to the following standard texts \cite{cc1,cc2}. See the reference \cite{philip} which provides nice details of the formulation, and corrected for some of the resulting equations (a recent work by Hansraj and Goswami, \cite{rit1}, have also corrected some of the equations). As such, the introduction here would be very brief, and restricted to the LRS class of spacetimes.

In the well known and powerful 1+3 approach, a unit tangent direction - call this \(u^a\) - along which timelike observers flow, is what threads the spacetime. The field equations are decomposed along \(u^a\) plus some constraint equations. If there exists some unit spacelike vector \(e^a\) obeying \(u_ae^a=0\), the 3-space from the previous split can be further decomposed along this direction as a \(1+2\) product manifold. This, in addition to the \textit{evolution} equations along \(u^a\), introduces a set of \textit{propagation} equations along \(e^a\), as well as some constraint equations. The splitting naturally introduces new derivative along \(e^a\) and on the 2-space (generally referred to as the \textit{sheet}):

\begin{itemize}
\item \textit{Along \(u^a\)}: \\ \(\dot{\psi}^{a\cdots b}_{\ \ \ \ c\cdots d}=u^f\nabla_f\psi^{a\cdots b}_{\ \ \ \ c\cdots d}\);

\item \textit{Along \(e^a\)}: \\ \(\hat{\psi}^{a\cdots b}_{\ \ \ \ c\cdots d}=e^f\nabla_f\psi^{a\cdots b}_{\ \ \ \ c\cdots d}\);

\item \textit{Along the sheet}:\\ \(\delta_f\psi^{a\cdots b}_{\ \ \ \ c\cdots d}=N^{\bar c}_{\ c}\cdots N^{\bar d}_{\ d}N^a_{\ \bar a}\cdots N^b_{\ \bar b}N^e_{\ f}\nabla_e\psi^{\bar a\cdots \bar b}_{\ \ \ \ \bar c\cdots \bar d}\),
\end{itemize}
for any tensor \(\psi^{a\cdots b}_{\ \ \ \ c\cdots d}\), where \(N^{ab}=g^{ab}+u^au^b-e^ae^b\) projects vectors and tensors to the sheet (it is the metric induced from the splitting, on the sheet).

A spacetime vector \(\psi^a\) can accordingly be decomposed as \(\psi^a=\Psi u^a+\bar\Psi e^a\), and a fully projected tensor \(^3\psi_{ab}\) on the 3-space decomposes as

\begin{eqnarray}\label{dec}
^3\psi_{ab}=\Psi\left(e_ae_b-\frac{1}{2}N_{ab}\right),
\end{eqnarray}
Similarly, the gradient of scalars decomposes along these directions:

\begin{eqnarray}\label{decn}
\nabla_a\psi=-\dot{\psi}u_a+\hat{\psi}n_a.
\end{eqnarray}

The energy momentum and Ricci tensors take the covariant form

\begin{subequations}
\begin{align}
T_{ab}&=\rho u_au_b+ph_{ab}+2Qu_{(a}e_{b)}+\Pi\left(e_ae_b-\frac{1}{2}N_{ab}\right),\label{emt}\\
R_{ab}&=\frac{1}{2}\left(\rho-p+2\Lambda\right)g_{ab}+\left(\rho+p\right)u_au_b+\left(2Qu_{(a}e_{b)}+\Pi\left(e_ae_b-\frac{1}{2}N_{ab}\right)\right).\label{ric1}
\end{align}
\end{subequations}
Here, \(h_{ab}=g_{ab}+u_au_b\) is the induced metric on the 3-space (this introduces the derivative operator \(\bar D_a=h^b_a\nabla_b\)), \(\rho=T_{ab}u^au^b\) is the (local) energy density, \(3p=T_{ab}h^{ab}\) is the (isotropic) pressure, \(Q=T_{ab}e^au^b\) is the heat flux, \(\Pi=T_{ab}e^ae^b-p\)) captures the degree of anisotropy, and \(\Lambda\) is the cosmological constant.

The covariant derivatives of the unit vector fields can be written down in terms of the kinematic variables:

\begin{subequations}
\begin{align}
\nabla_au_b&=-\mathcal{A}u_ae_b+\left(\frac{1}{3}\theta+\sigma\right)e_ae_b+\frac{1}{2}\left(\frac{2}{3}\theta-\sigma\right)N_{ab}+\Omega\varepsilon_{ab},\label{cd1}\\
\nabla_ae_b&=-\mathcal{A}u_au_b+\left(\frac{1}{3}\theta+\sigma\right)e_au_b+\frac{1}{2}\phi N_{ab}+\bar\Omega\varepsilon_{ab}.\label{cd2}
\end{align}
\end{subequations}
\(\mathcal{A}=e_a\dot{u}^a\) is acceleration; \(\theta=h^{ab}\bar D_au_b\) is expansion of \(u^a\); \(\phi\) is expansion of \(e^a\) (called sheet expansion), \(\sigma=\sigma_{ab}e^ae^b\) is the shear; \(2\Omega=\varepsilon^{ab}\nabla_au_b\) and \(2\bar{\Omega}=\varepsilon^{ab}\nabla_ae_b\) are the respective twists of \(u^a\) and \(e^a\) (\(\Omega\) is usually referred to simply as vorticity or rotation); and \(\varepsilon_{ab}\) is the 2-dimensional alternating tensor.

In general, the dot and hat derivatives do not commute, but obeys the the relation

\begin{eqnarray}\label{a20}
\dot{\hat{\psi}}-\hat{\dot{\psi}}=-\mathcal{A}\dot{\psi}+\left(\frac{1}{3}\theta+\sigma\right)\hat{\psi},
\end{eqnarray}
for an arbitrary scalar \(\psi\).

The Einstein field equations takes a covariant form as a set of evolution and propagation equations of the covariant scalars, vectors and tensors, along with some constraint equations. These are obtained from the Ricci and Bianchi identities, taking components along the unit timelike and spatial directions, and projection onto the sheet. For the purpose of this work, we will not need to evoke the field equations here, and the derivation of the full set of equations can be found elsewhere (see \cite{cc1,cc2,philip} and references therein).
  
%%%%%%%%%%%%%%%%%%%%

\section{Conformal Killing vectors and horizons}\label{ex}

This section introduces key concepts necessary to the development of the results herein. In particular, conformal symmetries in context of the LRS spacetimes, and conformal maps between static spacetimes and their conformal images are introduced.

A spacetime pair \(\left(M,g_{ab}\right)\) is said to admit a conformal Killing vector (CKV) \(\xi^a\) if Lie dragging the metric along \(\xi^a\) scales the metric:

\begin{eqnarray}\label{1b30}
\mathcal{L}_{\xi}g_{ab}=\nabla_a\xi_b+\nabla_b\xi_a=2\varphi g_{ab},
\end{eqnarray}
henceforth called the conformal Killing equations (CKE), where \(4\varphi=\nabla_a\xi^a\) . Depending on the character of the \textit{conformal divergence} \(\varphi\), the vector field \(\xi^a\) carries the particular nomenclature 

\begin{itemize}
\item a \textit{Killing vector} (KV) if \(\varphi=0\),
\item a \textit{homothetic Killing vector} (HKV) if \(\varphi\) is a non-zero constant,
\item a \textit{proper conformal Killing vector} - which will simply be abbreviated a (CKV) - if \(\varphi\) is non-constant.
\end{itemize}

Our interest here is in the case that the CKV is proper (we will later strengthen this by imposing that the gradient of the conformal divergence has no critical points, at least where the CKV becomes null). Additionally, we will required that the CKV is timelike.

A particular case of a CKV is the \textit{special conformal Killing vector} (SCKV). These have the property that \(\nabla_a\nabla_b\varphi=0\), and the gradient \(\nabla_a\varphi\) is nowhere vanishing.

\begin{definition}\label{def1}
A conformal Killiing horizon (CKH) in a spacetime admitting a timelike conformal Killing vector \(\xi^a\) is a null hypersurface on which the norms of both the vector field and its associated vorticity vector vanish. (See \cite{dy1,dy2}.)
\end{definition}

Let us briefly expand on the above definition. One begins with a spacetime that admits a conformal Killing vector. The existence of a conformal Killing Horizons is reliant on the notion of conformal observers (requiring that the vector field be timelike, with the conformal observers defined via the normalization \(\bar{\xi}^a=\left(1/\sqrt{-\xi^2}\right)\xi^a\), with \(\xi^2=\xi_a\xi^a\)) that travel along the orbits of the vector field. Spacetimes admitting timelike vector fields are said to be conformally stationary, i.e. they are conformal to a stationary spacetime. In the case that the conformal observers do not experience vorticity as they move along the vector field, the spacetime is conformal to a static one. (See \cite{dy1} for more discussions.)

Whenever these observers cannot be defined, one is led to the notion of a boundary, known as a (conformally) \textit{stationary limit 'surface'}, where the vector field is null. It was then established in \cite{dy1} that such hypersurface on which the norm of the conformal Killing vector vanishes, must also have the norm of its vorticity vanishing.

Of particular interest with regards to conformal Killing horizons is the case in which a dynamical spacetime is mapped to a static one, with horizons that are mapped in a conformal way. 

Consider a static spacetime \(M\) with metric \(g_{ab}\), and let \(\xi^a\) be a timelike Killing vector associated with \(g_{ab}\). Denote by \(\mathcal{T}\) the Killing horizon generated by \(\xi^a\), i.e.

\begin{eqnarray}\label{b1}
\xi_a\xi^a\overset{\mathcal{T}}{=}0,
\end{eqnarray}
with the hypersurface orthogonality condition holding

\begin{eqnarray}\label{b2}
\xi_{[a}\nabla_b\xi_{c]}\overset{\mathcal{T}}{=}0.
\end{eqnarray}
Let \(\tilde{g}_{ab}\) be a metric conformal to \(g_{ab}\):

\begin{eqnarray}\label{b3}
\tilde{g}_{ab}=\Xi^2g_{ab}\iff\tilde{g}^{ab}=\Xi^{-2}g^{ab},
\end{eqnarray}
where \(\Xi>0\) is a smooth function on \(M\). The manifold \(\left(M, \tilde{g}_{ab}\right)\) (or simply the metric \(\tilde{g}_{ab}\)) will henceforth be referred to as the conformal manifold. (Strictly speaking, the conformal manifold is the pair \(\left(M,[g]_{ab}\right)\) where \([g]_{ab}\) is an equivalence class of metrics on \(M\) related to each other through the scaling as in \eqref{b3}. Here we are just picking a representative of the equivalence class to refer to as the conformal manifold.) 

Indeed, we have that

\begin{eqnarray}\label{bk1}
\begin{split}
\mathcal{L}_{\xi}\tilde{g}_{ab}&=\tilde{\nabla}_a\tilde{\xi}_b+\tilde{\nabla}_b\tilde{\xi}_a\\
&=\left(\mathcal{L}_{\xi}\Xi^2\right)g_{ab}\\
&=\left(\mathcal{L}_{\xi}\Xi^2\right)\Xi^{-2}\tilde{g}_{ab}\\
&=2\tilde{\varphi}\tilde{g}_{ab},
\end{split}
\end{eqnarray}
where \(\tilde{\nabla}\) is the covariant derivative associated to the conformal manifold,

\begin{eqnarray}\label{bk2}
\tilde{\xi}_a=\tilde{g}_{ab}\xi^b=\Xi^2\xi_a,\qquad\mbox{(note that \(\tilde{\xi}^a=\xi^a\))}
\end{eqnarray}
which gives the relationship

\begin{eqnarray}\label{bk3}
\tilde{\varphi}=\mathcal{L}_{\xi}\left(\ln\Xi\right),
\end{eqnarray}
is the conformal divergence associated to \(\xi^a\) as a CKV for the conformal metric. Indeed the causal character of the vector field is preserved, i.e. \(\tilde{\xi}^a\) is also timelike.

It is clear that \(\xi^a\) is a KV for the conformal metric if and only if the conformal scale is constant with respect to the conformal observers so that the conformal factor vanishes. Otherwise, \(\xi^a\) is a CKV for the conformal metric. In this case, the transformation \eqref{b3} induces a map

\begin{eqnarray}\label{iden}
\mathcal{T}\mapsto\tilde{\mathcal{T}}
\end{eqnarray}
where \(\tilde{\mathcal{T}}\) is the hypersurface generated by the corresponding CKV \(\xi^a\) for the conformal metric:

\begin{eqnarray}\label{btee1}
\tilde{\xi}_a\tilde{\xi}^a\overset{\tilde{\mathcal{T}}}{=}0
\end{eqnarray}
The hypersurface orthogonality condition also follows:

\begin{eqnarray}\label{b4}
\tilde{\xi}_{[a}\tilde{\nabla}_b\tilde{\xi}_{c]}\overset{\tilde{\mathcal{T}}}{=}0.
\end{eqnarray}

The precise relationship between the static and dynamical spacetimes is determined by the character of the factor $\Xi$, which can be solved from \eqref{bk3}. Of course, retaining the black hole, one has to take care that $\Xi^2$ is unity at the horizon and is regular there. Furthermore, the map may not yield a solution to the Einstein field equations due to the fact that the equations are not invariant under conformal transformation.

Crucially, there apears to be a certain non-uniqueness of the map that is worth mentioning. As was recently discussed in \cite{kps}, one could in principle have a dynamical spacetime mapping to different static spacetimes due to the apparent arbitrariness of the conformal factor. In this particular case, there is a dependence, of the conformal factor, on an arbitrary function of the ratio of the radius to the time-dependent mass. Fixing this function to constant provided us with a desired static spacetime, yielding an invariant Hawking temperature. However, another non-constant choice of the function yields a (not desired) Euclidean signatured metric which is not asymptotically flat. 

Now, for our particular purpose with respect to the symmetries of LRS spacetimes, we will consider a CKV of the form 

\begin{eqnarray}\label{bk4}
\xi^a=\alpha u^a + \beta e^a,
\end{eqnarray}
with \(\alpha\neq0\) (wherever \(\alpha=0\), \(\xi^a\) is spacelike there). Its vorticity vector field can be written down explicitly as (see the reference \cite{kst1})

\begin{eqnarray}\label{fg1}
\begin{split}
\omega^a_{\xi}&=\ ^*\nabla^{[a}\xi^{b]}\xi_b\\
&=2\left(\alpha\Omega+\beta\bar{\Omega}\right)\xi_bu^{[b}e^{a]},
\end{split}
\end{eqnarray}
where the \(^*\) indicates the dual (the square bracket implies the usual antisymmetrization on the indices) and hence, for LRS II spacetimes, the vanishing of the norm of the vorticity is trivial. (Recall that \(\Omega\) and \(\bar{\Omega}\) are the respective vorticity scalars for the unit fields \(u^a\) and \(e^a\).) And, as our consideration is that of the class two solutions, the vanishing of the norm of a (timelike) CKV \(\xi^a\) is both necessary and sufficient to specify a CKH. More precisely, \textit{Suppose a LRS II spacetime \(M\) admit a timelike proper conformal Killing vector field \(\xi^a\). Then, \(M\) admits a CKH if an only if there exist null points in \(M\) for \(\xi^a\), in which case there exists a static black hole spacetime, \(M^*\), with a Killing horizon generated by \(\xi^a\).}

The CKE \eqref{1b30} can then be expressed in covariant form as the following set of four first order partial differential equations (obtained by fully projecting the CKE along the orthonormal unit directions, and the sheet)

\begin{subequations}
\begin{align}
\varphi&=\dot{\alpha}+\mathcal{A}\beta,\label{b17}\\
\varphi&=\hat{\beta}+\left(\frac{1}{3}\theta+\sigma\right)\alpha,\label{b18}\\
0&=\dot{\beta}-\hat{\alpha}+\mathcal{A}\alpha-\left(\frac{1}{3}\theta+\sigma\right)\beta,\label{b19}\\
2\varphi&=\alpha\left(\frac{2}{3}\theta-\sigma\right)+\beta\phi.\label{b20}
\end{align}
\end{subequations}
Because the equations are covariant, and we will work entirely in the conformal manifold, there is no ambiguity, and so we can drop the overhead tilde.

\section{Existence of conformal Killing horizons in LRS spacetimes}\label{exist}

In this section we investigate the existence of non-degenerate (non-vanishing surface gravity) conformal Killing horizons in LRS class II spacetimes, exploiting a thermodymic property of these horizons. To this end, we provide a result that will be useful in proving the main theorem.

Suppose a CKV \(\xi^a\) generates a CKH. Two (frequently used) definitions of the surface gravity associated with the horizon are \cite{tj1}

\begin{subequations}
\begin{align}
-\kappa_1 \xi_a&=\frac{1}{2}\nabla_{a} (\xi_b \xi^b),\label{ah1}\\
\kappa_2 \xi_a&=\xi^b\nabla_b\xi_a.\label{ah22}
\end{align}
\end{subequations}
(There are several definitions that are useful for different reasons, but of relevance to this work are the above two.) The two definitions are related via a \(\varphi\) translation:

\begin{eqnarray}
\begin{split}
-2\kappa_1\xi_a&=\nabla_{a}\left(\xi_b\xi^b\right)\\
&=2\xi^b\left(2\varphi g_{ab}-\nabla_b\xi_a\right)\\
&=2\left(2\varphi\xi_a-\xi^b\nabla_b\xi_a\right)\\
&=2\left(2\varphi-\kappa_2\right)\xi_a,
\end{split}
\end{eqnarray}
so that 

\begin{eqnarray*}
\kappa_1=\kappa_2-2\varphi. 
\end{eqnarray*}
We will now check how \(\kappa_1\) evolves along the conformal orbits. We establish this as the following statement:

\begin{lemma}\label{lem1} 
Let \(\xi^a\) be a timelike proper conformal Killing vector in a spacetime \(M\). Then, \(\xi^a\) generates a conformal Killing horizon with no homothetic points if and only if \(\kappa_1\) is constant along \(\xi^a\).
\end{lemma}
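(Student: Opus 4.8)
The plan is to follow how $\kappa_{1}$ changes along the orbits of $\xi^{a}$ by Lie-differentiating its defining relation \eqref{ah1}, and to show that the only obstruction to $\mathcal{L}_{\xi}\kappa_{1}=0$ is the product $(\xi_{b}\xi^{b})\nabla_{a}\varphi$. A conformal Killing horizon is precisely the locus on which $\xi_{b}\xi^{b}$ vanishes, while the absence of homothetic points is what keeps $\nabla_{a}\varphi$ from vanishing (in line with the standing assumption quoted in the abstract), so both sides of the equivalence should fall out of one identity. Two consequences of the CKE \eqref{1b30} power the computation: since $\mathcal{L}_{\xi}\xi^{b}=0$, one has $\mathcal{L}_{\xi}\xi_{a}=(\mathcal{L}_{\xi}g_{ab})\xi^{b}=2\varphi\,\xi_{a}$, hence $\mathcal{L}_{\xi}(\xi_{b}\xi^{b})=2\varphi\,(\xi_{b}\xi^{b})$; and the Lie derivative commutes with the gradient on scalars.

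\textbf{The master identity.} Put $S:=\xi_{b}\xi^{b}$, so \eqref{ah1} reads $\nabla_{a}S=-2\kappa_{1}\xi_{a}$ (a relation defining $\kappa_{1}$ as an honest scalar along the horizon, where $\nabla_{a}S\parallel\xi_{a}$). Applying $\mathcal{L}_{\xi}$ to both sides: the left side gives $\nabla_{a}(\mathcal{L}_{\xi}S)=\nabla_{a}(2\varphi S)=2S\,\nabla_{a}\varphi+2\varphi\nabla_{a}S=2S\,\nabla_{a}\varphi-4\varphi\kappa_{1}\xi_{a}$, while the right side gives $-2(\mathcal{L}_{\xi}\kappa_{1})\xi_{a}-2\kappa_{1}\mathcal{L}_{\xi}\xi_{a}=-2(\mathcal{L}_{\xi}\kappa_{1})\xi_{a}-4\varphi\kappa_{1}\xi_{a}$. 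The $\varphi\kappa_{1}\xi_{a}$ terms cancel and one is left with
\[(\mathcal{L}_{\xi}\kappa_{1})\,\xi_{a}=-S\,\nabla_{a}\varphi.\]
(As a sanity check, contracting \eqref{ah1} and \eqref{ah22} with $\xi^{a}$ and using $\xi^{a}\xi^{b}\nabla_{a}\xi_{b}=\varphi S$ recovers $\kappa_{2}=\varphi$, $\kappa_{1}=-\varphi$ away from the horizon.)

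\textbf{The two directions.} If $\xi^{a}$ generates a CKH, then $S=0$ there, the right side of the identity vanishes, and cancelling the factor $\xi_{a}$ (non-vanishing along the generators) yields $\mathcal{L}_{\xi}\kappa_{1}=0$; the vorticity clause of Definition~\ref{def1} is automatic in the LRS class by \eqref{fg1}. Conversely, if $\kappa_{1}$ is constant along $\xi^{a}$, the identity forces $S\,\nabla_{a}\varphi=0$ wherever $\kappa_{1}$ is defined; as $\xi^{a}$ is a \emph{proper} CKV the divergence $\varphi$ is non-constant, and the no-homothetic-points hypothesis strengthens this to $\nabla_{a}\varphi\neq0$, so $S=\xi_{b}\xi^{b}=0$ there. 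Hence $\xi^{a}$ is null on that set, is tangent to it since $\mathcal{L}_{\xi}S=2\varphi S=0$, and — invoking \eqref{fg1} once more for the vorticity norm — the set is a conformal Killing horizon generated by $\xi^{a}$ with no homothetic points.

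\textbf{Main obstacle.} The algebra of the identity is routine; the real care lies in the surrounding bookkeeping. One must be precise about the domain on which $\kappa_{1}$ is actually a scalar (the tensorial relation \eqref{ah1} pins it down only where $\nabla_{a}S\parallel\xi_{a}$, i.e. on the horizon, not in a full neighbourhood), check that the Lie-derivative steps — carried out with $S$ and $\varphi$ as ambient fields — restrict consistently to the null hypersurface, and deal with points where $\xi^{a}=0$, the would-be bifurcation locus, which is exactly what the no-homothetic-points clause is there to manage. I expect the converse implication, where one has to rule out a degeneration of $\nabla_{a}\varphi$, to be the step that genuinely consumes that hypothesis.
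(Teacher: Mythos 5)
Your proof is correct and lands on exactly the paper's master identity, \((\mathcal{L}_{\xi}\kappa_{1})\,\xi_{a}=-(\xi_{b}\xi^{b})\nabla_{a}\varphi\) (equation \eqref{ckvl2}), but you reach it by a different computation. The paper Lie-differentiates \(\kappa_{1}\xi_{a}=-\tfrac{1}{2}\nabla_{a}(\xi_{b}\xi^{b})\) by expanding \(\mathcal{L}_{\xi}(\nabla_{a}\xi_{b})\) and then invoking the conformal analogue of the Killing identity, \(\nabla_{c}\nabla_{a}\xi_{b}=R_{bacd}\xi^{d}+g_{ab}\nabla_{c}\varphi+g_{cb}\nabla_{a}\varphi-g_{ca}\nabla_{b}\varphi\), after which the Riemann term drops out under the triple contraction with \(\xi\). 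You instead treat \(S=\xi_{b}\xi^{b}\) as a scalar, use \(\mathcal{L}_{\xi}S=2\varphi S\) and the fact that \(\mathcal{L}_{\xi}\) commutes with the gradient of a scalar; this bypasses the second-derivative curvature identity entirely and is the more elementary derivation — the cancellation of the \(\varphi\kappa_{1}\xi_{a}\) terms plays the role that the cancellation of the \(\xi_{a}(\xi^{c}\nabla_{c}\varphi)\) terms plays in the paper. The two directions of the equivalence are then handled identically in both arguments (CKH \(\Rightarrow S=0\Rightarrow\mathcal{L}_{\xi}\kappa_{1}=0\); conversely \(\nabla_{a}\varphi\neq0\) forces \(S=0\)). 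Your closing caveat — that \eqref{ah1} only defines \(\kappa_{1}\) where \(\nabla_{a}S\) is parallel to \(\xi_{a}\), and that the bifurcation locus needs separate care — is a fair point that the paper does not address explicitly, but it affects both proofs equally and is not a gap in yours.
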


\begin{proof}
Taking the Lie derivative of \eqref{ah1} along \(\xi^a\), we have

\begin{eqnarray}\label{conka}
\begin{split}
-\mathcal{L}_{\xi}\left(\kappa_1\xi_a\right)&=\frac{1}{2}\mathcal{L}_{\xi}\left(\nabla_a\left(\xi_b\xi^b\right)\right)\\
&=\nabla_a\xi_b\mathcal{L}_{\xi}\xi^b+\xi^b\mathcal{L}_{\xi}\left(\nabla_a\xi_b\right)\\
&=\xi^b\mathcal{L}_{\xi}\left(\nabla_a\xi_b\right)\\
&=\xi^b\left(\nabla_a\xi^c\left(\nabla_c\xi_b+\nabla_b\xi_c\right)+\xi^c\nabla_c\nabla_a\xi_b\right)\\
&=\xi^b\left(\xi^c\nabla_c\nabla_a\xi_b+2\varphi\nabla_a\xi_b\right)\\
&=\xi^b\xi^c\nabla_c\nabla_a\xi_b-2\varphi\kappa_1\xi_a,
\end{split}
\end{eqnarray}
where from the second to the third line we have used  the fact that \(\mathcal{L}_{\xi}\xi^a=0\). Using the (easily checked) relation \(\mathcal{L}_{\xi}\xi_a=2\varphi\xi_a\), we have that \eqref{conka} becomes

\begin{eqnarray}\label{1ckvl2}
-\xi_a\mathcal{L}_{\xi}\kappa_1&=\xi^b\xi^c\nabla_c\nabla_a\xi_b
\end{eqnarray}
Combining the Ricci identities for \(\xi^a\) with the conformal Killing equations \eqref{1b30} we obtain the following relation for a conformal Killing vector (analogous to the KV Lemma \(\nabla_c\nabla_a\xi_b=R_{bacd}\xi^d\))

\begin{eqnarray}\label{ckvl}
\begin{split}
\nabla_c \nabla_a\xi_b &= R_{bacd}\xi^d + g_{ab}\nabla_c\varphi+g_{cb}\nabla_a\varphi-g_{ca}\nabla_b\varphi,
\end{split}
\end{eqnarray}
which we substitute in \eqref{1ckvl2} to obtain

\begin{eqnarray}\label{ckvl2}
-\xi_a\mathcal{L}_{\xi}\kappa_1=\xi_b\xi^b\nabla_a\varphi.
\end{eqnarray}
It therefore follows that

\begin{eqnarray}\label{ckvl3}
\mathcal{L}_{\xi}\kappa_1=0\iff\xi_b\xi^b=0,
\end{eqnarray}
provided there is no point at which \(\varphi\) is constant. That is, the existence of a CKH, given the conformal mapping \eqref{b3} between some static spacetime containing a Killing horizon with null generator \(\xi^a\), implies (and is implied by) the constancy of \(\kappa_1\) along the conformal Killing orbit.\qed
\end{proof}

This fact that \(\kappa_1\) is constant over the CKH, coupled with the fact that  it is conformaly invariant as shown in \cite{tj1}, has been used to argue that \(\kappa_1\) is the correct quantity to identify with the temperature of the CKH \cite{niel1}, as was mentioned in the introduction. For our purpose here, the thermodynamics is not to be dealt with. Of interest is the relation \eqref{ckvl3}.

\begin{remark}
\textit{In the above Lemma \ref{lem1} we explicitly specified that there are no homothetic points. In the case that we are considering horizons under the conformal map \(g_{ab}\rightarrow \Xi^2 g_{ab}\), as was earlier discussed, the existence of homothetic points is equivalent to the existence of a nontrivial solution to the following second order equation for \(\Xi\):}

\begin{eqnarray}\label{noh}
\left(\mathcal{L}_{\xi}^2-\varphi^2\right)\Xi=0,
\end{eqnarray}
\textit{where we have denoted \(\mathcal{L}_{\xi}^2\bullet=\mathcal{L}_{\xi}\left(\mathcal{L}_{\xi}\bullet\right)\) (can be obtained from \eqref{bk3}). Therefore, another way to ensure that there are no homothetic points is to impose that the equation \eqref{noh} admits no solution.}
\end{remark}

\subsection{Main results}

With the Lemma \ref{lem1}, we are now in the position to state and prove the main result.

\begin{theorem}\label{tht1}
Let \(M\) be a LRS II spacetime, and let (\ref{bk4}) be a (proper) CKV in \(M\), where \(\alpha,\beta\neq0\). Then, \(\xi^a\) generates a conformal Killing horizon if and only if either of its components, as well as the difference \(\dot{\alpha}-\dot{\beta}\), is constant along its orbits.
\end{theorem}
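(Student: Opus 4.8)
The plan is to work entirely in the conformal manifold, using Lemma \ref{lem1}, which tells us that $\xi^a$ generates a CKH (with no homothetic points) if and only if $\mathcal{L}_{\xi}\kappa_1=0$, equivalently (by \eqref{ckvl2}) if and only if $\xi_b\xi^b\,\nabla_a\varphi=0$. Since for the CKV \eqref{bk4} in an LRS II spacetime one has $\xi_b\xi^b=-\alpha^2+\beta^2$, and since $\varphi$ has no critical point by the standing assumption, the CKH condition is precisely $\alpha^2=\beta^2$ along the orbits of $\xi^a$; that is, $|\alpha|=|\beta|$ on the horizon $\tilde{\mathcal T}$. So the real content of the theorem is to translate ``$\alpha^2-\beta^2$ vanishes on a hypersurface reached by the flow'' into ``$\alpha$ (equivalently $\beta$), and $\dot\alpha-\dot\beta$, are constant along $\xi^a$.''

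First I would compute $\mathcal{L}_{\xi}(\alpha^2-\beta^2)$. Using $\mathcal{L}_{\xi}\alpha = \xi^a\nabla_a\alpha = \alpha\dot\alpha + \beta\hat\alpha$ (decomposing $\nabla_a\alpha$ via \eqref{decn} as $-\dot\alpha u_a+\hat\alpha e_a$ and contracting with $\xi^a=\alpha u^a+\beta e^a$, being careful with the sign $u_au^a=-1$), and similarly $\mathcal{L}_{\xi}\beta = \alpha\dot\beta+\beta\hat\beta$, I get
\begin{eqnarray*}
\tfrac12\mathcal{L}_{\xi}\left(\alpha^2-\beta^2\right)=\alpha\,\mathcal{L}_{\xi}\alpha-\beta\,\mathcal{L}_{\xi}\beta=\alpha^2\dot\alpha+\alpha\beta\hat\alpha-\alpha\beta\dot\beta-\beta^2\hat\beta.
\end{eqnarray*}
Now I would bring in the covariant CKE \eqref{b17}--\eqref{b20}: equation \eqref{b19} gives $\hat\alpha=\dot\beta+\mathcal{A}\alpha-(\tfrac13\theta+\sigma)\beta$, and subtracting \eqref{b17} from \eqref{b18} (using $\varphi=\dot\alpha+\mathcal A\beta=\hat\beta+(\tfrac13\theta+\sigma)\alpha$) lets me eliminate the kinematic combinations $\mathcal A$ and $\tfrac13\theta+\sigma$ in favour of $\dot\alpha-\hat\beta$ and the ratios of components. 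The aim is to reduce $\mathcal{L}_{\xi}(\alpha^2-\beta^2)$ to something proportional to $(\alpha-\beta)(\dot\alpha-\dot\beta)$ plus a term that vanishes on $\tilde{\mathcal T}$, or more cleanly, to show that $\mathcal{L}_{\xi}(\alpha^2-\beta^2)=2(\alpha^2-\beta^2)\varphi + (\text{something involving }\dot\alpha-\dot\beta)$ — indeed $\mathcal{L}_{\xi}(\xi_b\xi^b)=2\varphi\,\xi_b\xi^b + 2\kappa_1$-type terms from \eqref{ah1}. At the CKH, $\alpha^2=\beta^2$, so $\alpha=\epsilon\beta$ with $\epsilon=\pm1$, which is why ``the components coincide on the horizon'' in the sense $|\alpha|=|\beta|$; I would fix the sign convention (presumably $\alpha=\beta$ on $\tilde{\mathcal T}$, consistent with $\xi^a$ being null and future-pointing there) so that the statement ``either component is constant along its orbits'' makes sense.

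For the forward direction, assuming a CKH exists: $\xi^a$ is tangent to its own orbits and, since $\alpha^2-\beta^2$ vanishes on the null hypersurface $\tilde{\mathcal T}$ which is itself generated by $\xi^a$, the function $\alpha^2-\beta^2$ is constant ($\equiv 0$) along $\xi^a$ on $\tilde{\mathcal T}$, i.e. $\mathcal{L}_{\xi}(\alpha^2-\beta^2)\overset{\tilde{\mathcal T}}{=}0$. Feeding this into the identity derived above, and using $\alpha=\beta$ on $\tilde{\mathcal T}$, forces $\alpha(\mathcal{L}_\xi\alpha-\mathcal{L}_\xi\beta)\overset{\tilde{\mathcal T}}{=}0$; chasing the CKE a second time (applying $\mathcal{L}_\xi$ again, or using the commutator \eqref{a20} to trade $\dot{\hat\psi}$ for $\hat{\dot\psi}$) should convert this into the two assertions $\mathcal{L}_\xi\alpha=0$ and $\mathcal{L}_\xi(\dot\alpha-\dot\beta)=0$ on $\tilde{\mathcal T}$, with $\mathcal{L}_\xi\beta=0$ following since $\alpha=\beta$ there. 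For the converse, if $\mathcal{L}_\xi\alpha=0$ (hence also along the orbit $\alpha$ is a nonzero constant) together with $\mathcal{L}_\xi(\dot\alpha-\dot\beta)=0$, I would show $\mathcal{L}_\xi\beta=0$ as well (from the structure of the CKE, e.g. using \eqref{b19} and the commutator relation to propagate constancy), whence $\mathcal{L}_\xi(\alpha^2-\beta^2)=0$, so $\alpha^2-\beta^2$ is constant along each orbit; invoking the hypothesis that $\varphi$ has no critical point together with the existence/structure of the conformal map \eqref{b3} (and the remark's equation \eqref{noh} ruling out homothetic points), one argues this constant must be zero on some orbit, producing the null hypersurface, and then Lemma \ref{lem1} certifies it is a genuine CKH.

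The main obstacle I anticipate is the converse: constancy of a single component plus constancy of $\dot\alpha-\dot\beta$ along the flow does not obviously propagate to constancy of the \emph{other} component, and it does not by itself produce a \emph{zero} value of $\alpha^2-\beta^2$ (only constancy) — one genuinely needs to exploit the no-critical-point assumption on $\varphi$ and the global structure of the conformal map to locate an orbit on which the constant vanishes, i.e. to guarantee the horizon is actually realized rather than merely that $\kappa_1$ would be constant on it. Getting that existence argument airtight, and correctly handling the sign ambiguity $\alpha=\pm\beta$, is where the care will be needed; the covariant-derivative bookkeeping with \eqref{b17}--\eqref{b20} and \eqref{a20} is lengthy but routine.
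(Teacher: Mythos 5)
Your setup is sound as far as it goes (the decomposition $\mathcal{L}_{\xi}\alpha=\alpha\dot\alpha+\beta\hat\alpha$, the norm $\xi_b\xi^b=-\alpha^2+\beta^2$, and the reduction of the CKH condition to $\alpha^2=\beta^2$ via Lemma \ref{lem1} are all correct), but the route you choose does not reach the theorem's actual conclusions, and the two places where you say a computation ``should'' close are precisely where the content lies. In the forward direction, tangency of $\xi^a$ to the horizon gives you $\mathcal{L}_{\xi}(\alpha^2-\beta^2)\overset{\tilde{\mathcal T}}{=}0$, which with $\alpha=\beta$ yields only $\mathcal{L}_{\xi}\alpha=\mathcal{L}_{\xi}\beta$ on $\tilde{\mathcal T}$ --- not $\mathcal{L}_{\xi}\alpha=0$. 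That equality is consistent with any common nonzero value, and no amount of ``chasing the CKE a second time'' starting from $\mathcal{L}_{\xi}(\alpha^2-\beta^2)=0$ alone will upgrade it; the paper in fact uses $\mathcal{L}_{\xi}\alpha=\mathcal{L}_{\xi}\beta$ as an \emph{input} to a further step, not as the conclusion. The missing ingredient is the function $X$ defined by requiring $\xi^b\nabla_b\xi_a=\varphi\xi_a+X\xi_a$, i.e.\ by demanding that the residual term $\left(\dot\beta+\alpha\mathcal{A}\right)\left(\beta u_a+\alpha e_a\right)$ in the expansion of $\xi^b\nabla_b\xi_a$ be proportional to $\xi_a$ so that $\kappa_2$ (and hence $\kappa_1=X-\varphi$) is well defined. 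Writing $X$ on the horizon as the ratio \eqref{contd8} and demanding that it be finite and nonzero (no homothetic points) forces the denominator $\mathcal{L}_{\xi}\alpha$ and the numerator to vanish simultaneously, which is exactly where both $\mathcal{L}_{\xi}\alpha=0$ and $\mathcal{L}_{\xi}\left(\dot\alpha-\dot\beta\right)=0$ come from. Your proposal never produces the condition on $\dot\alpha-\dot\beta$ at all, which is a stated part of the theorem.

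Your converse is also aimed at the wrong mechanism. You worry (correctly, given your framing) that constancy of $\alpha^2-\beta^2$ along orbits does not make it zero, and you propose a global argument locating an orbit on which the constant vanishes. The actual resolution is local and algebraic: combining $\mathcal{L}_{\xi}\alpha=0$ with the covariant CKE \eqref{b17}--\eqref{b20} yields $\left(\alpha-\beta\right)\dot\alpha=0$ pointwise, so non-constancy of $\alpha$ forces $\alpha=\beta$, i.e.\ $\xi^a$ is null, exactly where $\mathcal{L}_{\xi}\alpha=0$ (this is the content of Theorem \ref{tht2}). No appeal to the global structure of the conformal map or to \eqref{noh} is needed for sufficiency. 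As written, your proposal identifies the right equivalence ($\xi_b\xi^b=0$ on the horizon $\Leftrightarrow$ $\alpha^2=\beta^2$) but leaves both implications of the theorem unproved: the forward direction stalls one step short of $\mathcal{L}_{\xi}\alpha=0$, and the backward direction is pointed at a global existence argument that is neither necessary nor carried out.
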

\begin{proof}
Let us begin by expanding the left hand side of \eqref{ah22} by inserting \eqref{bk4}:

\begin{eqnarray}\label{contd}
\begin{split}
\xi^b\nabla_b\xi_a&=\left(\alpha u^b+\beta e^b\right)\nabla_b\left(\alpha u_a+\beta e_a\right)\\
&=\varphi\xi_a+\left(\dot{\beta}+\alpha\mathcal{A}\right)\left(\beta u_a+\alpha e_a\right),
\end{split}
\end{eqnarray}
where we have used \eqref{b17}, \eqref{b18} and \eqref{b19} to substitute for \(\dot{\alpha},\hat{\beta}\) and \(\hat{\alpha}\) respectively. 

Now, in order to compare the right hand side of \eqref{contd} with the right hand side of \eqref{ah22}, it is reuired that we impose that the second term of \eqref{contd} is a scale of $\xi^a$. That is, we seek some (nonzero) smooth function \(X\) (we will see shortly why \(X\) has to be nontrivial), such that

\begin{eqnarray*}
X\xi^a=\left(\dot{\beta}+\alpha\mathcal{A}\right)\left(\beta u_a+\alpha e_a\right).
\end{eqnarray*}
This requires, by expanding the above and comparing component, that the function $X$ satisfies

\begin{subequations}
\begin{align}
X\alpha&=\beta\left(\dot{\beta}+\alpha\mathcal{A}\right),\label{contd1}\\
X\beta&=\alpha\left(\dot{\beta}+\alpha\mathcal{A}\right),\label{contd2}
\end{align}
\end{subequations}
in which case one has 

\begin{eqnarray}\label{contd3}
\kappa_2=\varphi+X.
\end{eqnarray}
This would then give the expression for \(\kappa_1\) as

\begin{eqnarray}\label{contd4}
\kappa_1=X-\varphi,
\end{eqnarray}
so that the constancy of \(\kappa_1\) becomes the following condition:

\begin{eqnarray}\label{contd5}
\mathcal{L}_{\xi}\left(X-\varphi\right)=0.
\end{eqnarray}
It is now clear that the trivial case \(X=0\) is ruled out since the horizon is required to have no homothetic points.

Now, by assumption, there are no homothetic points and hence, provided we can find such smooth function \(X\), the condition \eqref{contd5} is equivalent to the vanishing of the norm of \(\xi^a\), at which points \(\alpha=\beta\). We seek the criteria for the existence of the function \(X\).

Without imposing \eqref{contd5}, let us take a step back and assume the existence of $X$ such that we have $\kappa_2=\varphi+X$, and $X$ takes the form

\begin{eqnarray*}
X=\frac{B_1}{B_2}.
\end{eqnarray*}
We know that $X$ can have no zeros. We also know that $X$ is finite (actually a divergent $X$ gives a divergent $\omega_{\xi}^a$). Now, if $B_1$ vanishes at a point $\bar p$, $B_2$ must vanish at $\bar p$ in order for $X$ to be non-zero and defined there. Therefore, either $B_i=0$ or $B_i\neq0$, for $i=1,2$. If $B_i\neq0$, then, one finds, from \eqref{contd1} and \eqref{contd1}, that 

\begin{eqnarray*}
\left(\dot{\beta}+\alpha\mathcal{A}\right)\left(\alpha^2-\beta^2\right)=0.
\end{eqnarray*}
Indeed, $\dot{\beta}+\alpha\mathcal{A}\neq0$ for otherwise $X=0$. Hence, $\xi^a$ must be a null vector, $\xi^a=\alpha\left(u^a\pm e^a\right)$, contradicting that $\xi^a$ is timelike and only null on the horizon. We therefore must have that both $B_1$ and $B_2$ are simultaneously zero.

Taking the Lie derivative of \eqref{contd1} along \(\xi^a\), using \eqref{b17} and \eqref{contd5}, evaluating on the horizon (and noting that on the horizon \(\mathcal{L}_{\xi}\alpha=\mathcal{L}_{\xi}\beta\)), we have

\begin{eqnarray}\label{contd8}
X=-\frac{\alpha\mathcal{L}_{\xi}\left(\dot{\alpha}-\dot{\beta}\right)-\left(\dot{\beta}+\alpha\mathcal{A}\right)\mathcal{L}_{\xi}\alpha}{\mathcal{L}_{\xi}\alpha}.
\end{eqnarray}
And since both the numerator and denominator of \eqref{contd8} simultaneously vanish, the following conditions are thereby imposed

\begin{subequations}
\begin{align}
\mathcal{L}_{\xi}\alpha&=0,\label{contd10}\\
\mathcal{L}_{\xi}\left(\dot{\alpha}-\dot{\beta}\right)&=0.\label{contd10x}
\end{align}
\end{subequations}
In other words, the existence of \(X\) requires both \(\alpha\) (and therefore \(\beta\)), as well as \(\dot{\alpha}-\dot{\beta}\) to be constant along \(\xi^a\). And since \eqref{contd5} implies \(\xi_a\xi^a=0\), the conditions \eqref{contd10} and \eqref{contd10x} together imply the existence of a conformal Killing horizon.\qed 
\end{proof}

Notice that we do not utilize the vorticities of \(u^a\) and \(e^a\). Hence, one adequately assesses that the above result naturally extends to the non-LRS II case. It was shown in \cite{she2} that spacetimes of the non-LRS II types cannot admit dynamical black holes (those with horizon increasing in area), foliated by marginally outer trapped surfaces (MOTS). We also do know that there are indeed LRS II solutions that admit dynamical black holes (the Lemaitre-Tolman-Bondi solution is a case in point) that are bounded by dynamical horizons. There is, on the other hand, no explicit example of a non LRS II solution admitting a dynamical black hole, at least to our current knowledge. So, while conformal Killing horizons and dynamical horizons are structurally different objects as the former do not admit a foliation by MOTS, we take the result established in \cite{she2} as a potentially strong indication that these spacetimes may not admit conformal Killing horizons as well. However, it is also important to note that the existence of a dynamical horizon in a spacetime of the LRS II class does not necessarily imply that the spacetime admits a conformal Killing horizon. In this vein, our analyses here has been exclusively geared towards LRS II geometries.

Notice that by setting \(\alpha=\beta\) in \eqref{contd1} (or \eqref{contd2}) gives the explicit form of \(X\):

\begin{eqnarray}\label{contd6}
X=\dot{\beta}+\alpha\mathcal{A},
\end{eqnarray}
thereby giving the explicit form of the surface gravity as

\begin{eqnarray}\label{contd7}
\kappa_1=-\left(\dot{\alpha}-\dot{\beta}\right),
\end{eqnarray}
where we have used \eqref{b17} to substitute for \(\varphi\). This then gives \(\kappa_2=\dot{\alpha}+\dot{\beta}+2\alpha\mathcal{A}\). 

As can be checked, the form of the surface gravity could also directly be obtained using the formula (see for example \cite{tj1})

\begin{eqnarray*}
\begin{split}
\sqrt{2}\left(\kappa_1+\varphi\right)&=\sqrt{-\nabla_a\xi_b\nabla^{[a}\xi^{b]}}\\
&=\sqrt{\nabla_a\xi_b\nabla^b\xi^a-4\varphi^2},
\end{split}
\end{eqnarray*}
to obtain \eqref{contd7}. Our approach in the proof of Theorem \ref{tht1}, however, provides us with a relatively easier route to obtaining the surface gravity. 

The explicit form of \(\kappa_1\) suggests that the condition \eqref{contd10x} is sufficient, and that although easier to work with, the condition \eqref{contd10} is necessary but not sufficient.  However, the result of Theorem \ref{tht1} can be strengthened as follows:

\begin{theorem}\label{tht2}
Let \(M\) be a LRS II spacetime, and let (\ref{bk4}) be a (proper) CKV in \(M\), where \(\alpha\) and \(\beta\) are non-constant. Then, a necessary and sufficient condition for \(\xi^a\) to generate a conformal Killing horizon is that either of its components is constant along its orbits.
\end{theorem}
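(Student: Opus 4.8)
The plan is to obtain Theorem \ref{tht2} from Theorem \ref{tht1} by showing that, once $\alpha$ and $\beta$ are assumed non-constant, the single requirement $\mathcal{L}_\xi\alpha=0$ (equivalently $\mathcal{L}_\xi\beta=0$) already forces the second condition $\mathcal{L}_\xi(\dot\alpha-\dot\beta)=0$ of Theorem \ref{tht1}. Necessity is immediate: if $\xi^a$ generates a CKH then Theorem \ref{tht1} gives $\mathcal{L}_\xi\alpha=0$ (along with $\mathcal{L}_\xi(\dot\alpha-\dot\beta)=0$), so in particular either component is constant along the orbits. Hence all the content lies in the converse. Throughout I keep the standing hypotheses of Lemma \ref{lem1} --- no homothetic points, and $\varphi$ having no critical points --- so that Lemma \ref{lem1} and Theorem \ref{tht1} remain available.

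For the converse, assume $\mathcal{L}_\xi\alpha=0$ on $M$ (the case $\mathcal{L}_\xi\beta=0$ being symmetric). First I would write $\mathcal{L}_\xi\alpha=\alpha\dot\alpha+\beta\hat\alpha$ and eliminate $\dot\alpha,\hat\alpha$ using the covariant CKE \eqref{b17} and \eqref{b19}, turning the hypothesis into a single scalar identity among $\alpha,\beta,\varphi$ and the kinematic scalars $\mathcal{A},\theta,\sigma$. Carrying out the same substitution for $\mathcal{L}_\xi\beta$ and forming the sum and difference exhibits $\mathcal{L}_\xi\alpha\pm\mathcal{L}_\xi\beta$ as $(\alpha\pm\beta)$ times a smooth function; this simultaneously identifies the prospective horizon with the null locus $\{\xi_a\xi^a=0\}=\{\alpha^2=\beta^2\}$, on which $\mathcal{L}_\xi\alpha=\mathcal{L}_\xi\beta$, and --- after a second application of $\mathcal{L}_\xi$, using the dot--hat commutator \eqref{a20} together with the remaining CKE \eqref{b18} and \eqref{b20} to keep reducing dot- and hat-derivatives of $\alpha,\beta$ to $\varphi$ and kinematic scalars --- yields, on that locus, an identity of the schematic form $\bigl(\text{a factor built from }\alpha,\beta\bigr)\,\mathcal{L}_\xi(\dot\alpha-\dot\beta)=0$. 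This is exactly where the non-constancy of $\alpha$ and $\beta$ is spent: it lets me divide out that factor, which is precisely the one whose possible vanishing is, per the remark following Theorem \ref{tht1}, responsible for $\mathcal{L}_\xi\alpha=0$ alone failing to be sufficient in the general case. Having recovered $\mathcal{L}_\xi(\dot\alpha-\dot\beta)=0$, the conclusion that $\xi^a$ generates a CKH --- including that the null locus is non-empty --- then follows from Theorem \ref{tht1}.

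I expect the main obstacle to be this middle computation: pushing the second Lie derivative through $\mathcal{L}_\xi\alpha=0$ and repeatedly substituting \eqref{b17}--\eqref{b20} and \eqref{a20}, while tracking carefully which terms vanish identically and which vanish only on $\{\xi_a\xi^a=0\}$, so that the non-constancy hypothesis is invoked only where the relevant factor ($\alpha$, $\beta$, or $\alpha^2-\beta^2$) is genuinely nonzero. A related subtlety worth care is that $\mathcal{L}_\xi\alpha=0$ and $\mathcal{L}_\xi\beta=0$ are inequivalent away from the horizon; the bookkeeping should therefore impose the hypothesis on one component globally while using $\mathcal{L}_\xi\alpha=\mathcal{L}_\xi\beta$ only on the null locus, in the same spirit as the proof of Theorem \ref{tht1}.
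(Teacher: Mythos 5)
Your treatment of necessity is fine and matches the paper, and your opening computational move --- substituting \eqref{b17}--\eqref{b19} into $\mathcal{L}_\xi\alpha=\alpha\dot\alpha+\beta\hat\alpha$ so that a factor of $(\alpha-\beta)$ can be extracted --- is the right instinct; indeed one finds $\mathcal{L}_\xi\alpha-\mathcal{L}_\xi\beta=(\alpha-\beta)\bigl[\varphi-\dot\beta+\beta\bigl(\tfrac{1}{3}\theta+\sigma\bigr)\bigr]$. But the route you then take for sufficiency is circular. You derive the identity $(\text{factor})\,\mathcal{L}_\xi(\dot\alpha-\dot\beta)=0$ \emph{on the null locus} $\{\alpha^2=\beta^2\}$ and then invoke Theorem \ref{tht1} to conclude that this locus is non-empty. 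An identity established only on a set cannot be used to prove that the set exists, and the two conditions \eqref{contd10}--\eqref{contd10x} of Theorem \ref{tht1} are themselves extracted by evaluating on the horizon, so feeding them back in cannot bootstrap the horizon into existence. Moreover, knowing only $\mathcal{L}_\xi\alpha=0$, the factorization of $\mathcal{L}_\xi\alpha-\mathcal{L}_\xi\beta$ does not ``identify the prospective horizon with the null locus'': to conclude anything from it you would also need $\mathcal{L}_\xi\beta=0$ and the non-vanishing of the bracketed factor, neither of which is available at that stage.

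The paper's proof avoids all of this and does not return to Theorem \ref{tht1} for sufficiency. From $\mathcal{L}_\xi\alpha=0$ and the CKE alone one obtains the \emph{pointwise} relations \eqref{fgd1}--\eqref{fgd2}, valid wherever $\mathcal{L}_\xi\alpha=0$ with no restriction to any null set, and combining them with \eqref{b17}--\eqref{b18} collapses everything to $0=(\alpha-\beta)\dot\alpha$. The non-constancy hypothesis is then spent on $\dot\alpha$, to rule out $\dot\alpha=0$ --- not on dividing out a factor built from $\alpha$, $\beta$ or $\alpha^2-\beta^2$ as you propose. The conclusion $\alpha=\beta$ means $\xi^a$ is null precisely where $\mathcal{L}_\xi\alpha=0$, and since the vorticity norm vanishes identically in LRS II this locus is a CKH; no second Lie derivative, no recovery of $\mathcal{L}_\xi(\dot\alpha-\dot\beta)=0$, and no appeal to Theorem \ref{tht1} is needed. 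I recommend redoing the sufficiency argument along these direct lines.
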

\begin{proof}
The necessity follows from Theorem \ref{tht1}. We establish the sufficiency. Suppose \(\mathcal{L}_{\xi}\alpha=0\). Then, using the CKE \eqref{b17} to \eqref{b20}, we obtain (after some detailed calculations)

\begin{subequations}
\begin{align}
\dot{\beta}+\hat{\beta}&=\left(\alpha-\beta\right)\left(\mathcal{A}+\frac{1}{3}\theta+\sigma\right)\label{fgd1}\\
0&=\beta\dot{\beta}+\alpha\hat{\beta}+\left(\alpha^2-\beta^2\right)\left(\frac{1}{3}\theta+\sigma\right).\label{fgd2}
\end{align}
\end{subequations}
Substituting (\ref{fgd2}) into (\ref{fgd1}), and using (\ref{b17}) and (\ref{b18}), we can simplify to obtain

\begin{eqnarray}
0=\left(\alpha-\beta\right)\dot{\alpha}.
\end{eqnarray}
By assumption, \(\alpha\) is not constant and hence, \(\alpha=\beta\), so that \(\xi^a\) is null where \(\mathcal{L}_{\xi}\alpha=0\).\qed
\end{proof}

The above result provides an alternative characterization of a CKH, with respect to the CKV considered in this work: CKH are critical points of component of a CKV along orbits of the CKV.

Lastly, we consider a result that follows from Theorem \ref{tht2}. Let us consider an arbitrary vector field \(\xi^a=\alpha u^a+ \beta e^a\) in a LRS II spacetime and suppose that \(\mathcal{L}_{\xi}\alpha=\alpha\dot{\alpha}+\beta\hat{\alpha}=0\). Multiply \eqref{b17} and \eqref{b19} respectively by \(\alpha\) and \(\beta\). Adding the resulting equations and using \eqref{b20} to substitute for \(\varphi\), we simplify to obtain

\begin{eqnarray}
\begin{split}
0&=\alpha\dot{\alpha}+\beta\hat{\alpha}\\
&=\frac{1}{3}\left(\alpha^2-\beta^2\right)\theta+\beta\left(\dot{\beta}+\frac{1}{2}\alpha\phi\right).
\end{split}
\end{eqnarray}
Indeed, if \(\theta\) is non-zero and it holds that

\begin{eqnarray}
\dot{\beta}+\frac{1}{2}\alpha\phi=0,
\end{eqnarray}
then, \(\alpha=\pm\beta\) at those points where \(\mathcal{L}_{\xi}\alpha=0\). In the `\(-\)' case, \(\xi\) aligns with the tangent vector to ingoing null rays to surfaces of constant \(t\) and \(r\). This therefore leads us to the following 

\begin{proposition}\label{proz}
Let \(\xi^a=\alpha u^a +\beta e^a\) be a timelike vector field in a LRS spacetime with non-vanishing expansion, which does not point along the tangent to ingoing null geodesics. If the components of the vector field satisfy 

\begin{eqnarray*}
\dot{\beta}+\frac{1}{2}\alpha\phi=0,
\end{eqnarray*}
\(\xi^a\) is a CKV and generates a CKH.
\end{proposition}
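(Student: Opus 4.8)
The plan is to deduce the statement from the biconditional of Theorem~\ref{tht2}, using the scalar identity recorded immediately above its statement. Concretely, for a CKV of the form \eqref{bk4} the covariant equations \eqref{b17} and \eqref{b19}, scaled by $\alpha$ and $\beta$ and added, together with \eqref{b20} used to eliminate $\varphi$, give
\[\mathcal{L}_{\xi}\alpha=\alpha\dot{\alpha}+\beta\hat{\alpha}=\frac{1}{3}\left(\alpha^2-\beta^2\right)\theta+\beta\left(\dot{\beta}+\frac{1}{2}\alpha\phi\right).\]
Imposing the hypothesis $\dot{\beta}+\frac12\alpha\phi=0$ reduces this to $\mathcal{L}_{\xi}\alpha=\frac13(\alpha^2-\beta^2)\theta$, and since $\theta$ is assumed nowhere zero, the vanishing locus of $\mathcal{L}_{\xi}\alpha$ coincides exactly with $\{\alpha^2=\beta^2\}$, i.e. with the set where $\xi_a\xi^a=-\alpha^2+\beta^2$ vanishes and $\xi^a$ is null.

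Next I would dispose of the sign ambiguity. On that locus one has $\alpha=\pm\beta$; the branch $\alpha=-\beta$ would make $\xi^a=\alpha(u^a-e^a)$ proportional to the tangent of the ingoing null geodesics of the $\{t,r\}=\mathrm{const}$ surfaces, which is ruled out by hypothesis, so in fact $\alpha=\beta$ there and the locus is a genuine null hypersurface on which $\xi^a$ becomes null. Because $\alpha$ (hence $\beta$) is non-constant along $\xi^a$ there, the sufficiency direction of Theorem~\ref{tht2} applies: the vanishing of $\mathcal{L}_{\xi}\alpha$ on this hypersurface identifies it as a conformal Killing horizon. Finally, since $M$ is LRS~II the vorticity scalars $\Omega,\bar\Omega$ vanish, so by \eqref{fg1} the vorticity vector $\omega^a_{\xi}$ vanishes identically and Definition~\ref{def1} is met with no further work; thus $\xi^a$ generates a CKH.

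The step I expect to be the genuine obstacle is justifying that $\xi^a$ is a (proper) conformal Killing vector at all. The identity in the first paragraph is extracted from the CKE in covariant form \eqref{b17}--\eqref{b20}, so one must either carry ``$\xi^a$ is a CKV of the form \eqref{bk4}'' as a standing hypothesis (which is natural within the setup of Section~\ref{exist}) or show that the single scalar relation $\dot{\beta}+\frac12\alpha\phi=0$, combined with $\xi^a$ being timelike with a null locus and with the LRS~II field equations, already forces the remaining components of the CKE. Establishing the latter --- essentially that the otherwise over-determined CKE system becomes consistent under this one extra relation in an expanding LRS~II geometry --- is where the real content would lie; by comparison, the causal bookkeeping that excludes $\alpha=-\beta$ throughout the zero set, and checking that $\alpha,\beta$ are non-constant so that Theorem~\ref{tht2} genuinely applies, are routine.
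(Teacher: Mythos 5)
Your argument is essentially the paper's own proof: the paper obtains the identical identity by multiplying \eqref{b17} and \eqref{b19} by $\alpha$ and $\beta$, eliminating $\varphi$ via \eqref{b20}, then uses $\theta\neq0$ together with the exclusion of the ingoing-null alignment to conclude $\alpha=\beta$ exactly where $\mathcal{L}_{\xi}\alpha=0$, and feeds this into Theorem~\ref{tht2}. The obstacle you flag at the end is genuine but is equally unaddressed in the paper: since the identity is extracted from the conformal Killing equations, the published derivation also presupposes that $\xi^a$ is a CKV, and the ``is a CKV'' clause of Proposition~\ref{proz} is never actually derived from the single relation $\dot{\beta}+\tfrac{1}{2}\alpha\phi=0$ there either.
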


If one has an arbitrary vector field in an expanding LRS spacetime, Proposition \ref{proz} provides a mechanism to simultaneously check whether the vector field is a CKV and if it generates a CKH. 

By Proposition \ref{proz}, if the vector field is a CKV, then, under the accompanying assumptions, it necessarily generates a conformal Killing horizon. In this case the, again the Proposition might not apply to a non-LRS II case as we had earier argued that there is a strong reason to believe that a non-LRS spacetime cannot admit a conformal Killing horizon. 

\subsubsection*{Behavior of components of the CKV along the CKH}

Indeed, the existence of a conformal Killing horizon presents further constraints on the components of the CKV and their evolution. To see this, it can be checked that it follows that the conformal observers are converging, i.e. \(\varphi<0\), as one would expect in cases relevant to the formation of black holes. On the horizon, using \eqref{b20}, \(\varphi<0\) explicitly becomes

\begin{eqnarray}\label{contdz1}
\alpha\left(\frac{2}{3}\theta-\sigma+\phi\right)<0.
\end{eqnarray}
The term in the parenthesis is the expansion of outgoing null geodesics from 2-surfaces in the spacetimes under consideration (see the references \cite{rit2,she3}), and is non-zero here since the CKH is not a MOTS. The term cannot be negative either, as this would imply that 2-surfaces foliating the CKH lie to the interior of a MOTS, and hence, trapped. Therefore, the term must be positive, so that \(\alpha\) (and consequently \(\beta\)) is a (strictly) negative constant quantity on the CKH. 

Now, \(\mathcal{L}_{\xi}\alpha=\mathcal{L}_{\xi}\beta=0\implies \dot{\alpha}-\dot{\beta}=-\left(\hat{\alpha}-\hat{\beta}\right)\), thereby giving \(\kappa_1\) alternatively as

\begin{eqnarray}\label{contdz2}
\kappa_1=\hat{\alpha}-\hat{\beta}
\end{eqnarray}
It follows that 

\begin{eqnarray*}
\dot{\beta}>\dot{\alpha};\quad \hat{\beta}<\hat{\alpha}.
\end{eqnarray*}

If the acceleration \(\mathcal{A}\) is non-negative, we dismiss \(\dot{\beta}<0\): Since \(\hat{\beta}>0\), from \eqref{b18} (noting that \(\varphi<0\)) we have that \((1/3)\theta+\sigma>0\). We see that \eqref{b19} holds only if \(\mathcal{A}<0\). Thus, \(\beta\) must be increasing along \(u^a\). So, in the non-accelerating case, necessarily, \(\alpha\) decreases along \(u^a\). (We note that \((1/3)\theta+\sigma>0\) as well, so that existence of a CKH in this case rests on a balance between the shear and timelike expansion.)

From the result of Proposition \ref{proz}, it is immediately clear that, spacetimes with identically vanishing or strictly negative sheet expansions are ruled out since, in these cases, one would have \(\dot{\beta}\leq0\).

%%%%%%%%%%%%%%%

\subsection{On the special conformal Case}

We have established the necessary and sufficient condition for a CKV in LRS spacetime to generate a conformal Killing horizon. As we do not consider homothety, a natural first step away from homothety, which is proper, is a SCKV. By definition, the conformal divergence has a nowhere vanishing gradient, a crucial ingredient in the proof of Theorem \ref{tht1}. Here, we specialize to the case of a SCKV and establish the following:

\begin{proposition}\label{pro1}
A LRS spacetime cannot admit a CKH generated by a SCKV of the form \eqref{bk4}.
\end{proposition}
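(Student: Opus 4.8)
The plan is to show that the defining property of a SCKV, namely $\nabla_a\nabla_b\varphi = 0$, is incompatible with the structure of an LRS spacetime admitting a CKV of the form \eqref{bk4}. First I would use the decomposition \eqref{decn} to write $\nabla_a\varphi = -\dot\varphi u_a + \hat\varphi e_a$, and then compute $\nabla_a\nabla_b\varphi$ by applying the covariant derivative again, using the identities \eqref{cd1} and \eqref{cd2} for $\nabla_a u_b$ and $\nabla_a e_b$. Setting the result to zero and projecting along the various orthonormal directions and onto the sheet yields a system of first-order PDEs relating $\dot\varphi$, $\hat\varphi$, and the kinematic scalars $\mathcal{A}$, $\theta$, $\sigma$, $\phi$. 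In particular, the sheet-projected component should force a relation of the type $\dot\varphi\left(\tfrac23\theta - \sigma\right) = 0$ together with $\hat\varphi\,\phi = 0$ (or a similar pair), while the mixed $u$–$e$ component gives another constraint.

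The second ingredient is the observation that for a SCKV the gradient $\nabla_a\varphi$ is nowhere vanishing, so at each point either $\dot\varphi \neq 0$ or $\hat\varphi \neq 0$; in particular we cannot kill the constraints by setting both to zero. I would then combine this with the CKE in covariant form \eqref{b17}–\eqref{b20}. Differentiating those four equations (dot and hat derivatives, using the commutator \eqref{a20}) and feeding in the SCKV constraints $\dot\varphi$, $\hat\varphi$ behaving like gradient components of a function annihilated by $\nabla_a\nabla_b$, one should be able to derive that $\left(\tfrac13\theta+\sigma\right)$, $\mathcal{A}$, and $\phi$ are tightly pinned down — pinned down enough that $\alpha$ and $\beta$ are forced to be constant, contradicting that $\xi^a$ is a proper CKV with non-constant components, or else forcing $\nabla_a\varphi$ to vanish somewhere, contradicting the SCKV hypothesis.

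A cleaner route, which I would try first, exploits the hypothesis flagged in the preamble that the gradient of the divergence is non-null on the horizon. If the SCKV generated a CKH, then by Theorem \ref{tht2} the components satisfy $\mathcal{L}_{\xi}\alpha = \mathcal{L}_{\xi}\beta = 0$ on the horizon, $\alpha = \beta$ there, and by \eqref{contd7} we have $\kappa_1 = -(\dot\alpha - \dot\beta)$, which is constant along $\xi^a$ by Lemma \ref{lem1}. Now $\nabla_a\varphi$ being covariantly constant ($\nabla_b\nabla_a\varphi = 0$) means $\nabla_a\varphi$ is a parallel covector field; I would contract \eqref{ckvl2} (from the proof of Lemma \ref{lem1}), $-\xi_a\mathcal{L}_{\xi}\kappa_1 = \xi_b\xi^b\,\nabla_a\varphi$, and differentiate again along $\xi^a$, using parallelism of $\nabla_a\varphi$ to convert everything into algebraic relations among $\varphi$, $\xi^2$, and $\mathcal{L}_{\xi}(\xi^2) = 2\varphi\xi^2$. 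The non-nullity of $\nabla_a\varphi$ then prevents the resulting expression from being consistent unless $\varphi$ is constant, i.e. the CKV is homothetic — contradicting that it is proper.

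The main obstacle will be the second covariant derivative computation: extracting the full set of constraints from $\nabla_a\nabla_b\varphi = 0$ in the $1{+}1{+}2$ formalism requires careful bookkeeping with \eqref{cd1}, \eqref{cd2}, and the commutator \eqref{a20}, and one has to be sure that no component is being silently discarded. The delicate point is ruling out the degenerate branch in which the kinematic quantities conspire so that $\nabla_a\varphi$ stays non-zero while still satisfying $\nabla_a\nabla_b\varphi=0$; this is precisely where the non-null hypothesis on $\nabla_a\varphi$ (and the assumption that $\varphi$ has no critical points) must be invoked to close the argument, and getting the logic of that elimination airtight — rather than merely plausible — is the crux.
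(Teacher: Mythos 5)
Your proposal does not contain a proof; both routes are left as plans whose decisive step is asserted rather than derived, and neither contains the idea the paper actually uses. The paper proves Proposition \ref{pro1} as a corollary of the stronger Proposition \ref{pro2}: an LRS spacetime cannot admit a SCKV of the form \eqref{bk4} at all when $\nabla_a\varphi$ is timelike. The engine of that proof is the structural theorem of Hall and Coley--Tupper: a SCKV with non-null $\nabla_a\varphi$ forces the curvature to factor as $R_{abcd}=\mathcal{B}Z_{ab}Z_{cd}$ with $Z_{ab}$ built from $\nabla_a\varphi$ and $\nabla_a f$, which in the $1{+}1{+}2$ language collapses to $R_{ab}=\rho N_{ab}$ (after extracting $Q=0$, $\Lambda=0$, $\Pi=2p$ by comparison with \eqref{ric1}). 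Feeding this into the SCKV symmetry-inheritance condition $\mathcal{L}_{\xi}R_{ab}=0$ and taking the $u^au^b$ component gives $\rho\varphi=0$; $\rho=0$ contradicts the non-vanishing of $\nabla_a f$, and $\varphi=0$ contradicts properness. The remaining causal characters of $\nabla_a\varphi$ are disposed of by citation (null implies a generalized pp-wave, which is not LRS; spacelike implies the SCKV is spacelike, hence cannot be the timelike horizon generator). None of this --- the curvature factorization, the Ricci inheritance, or the case split on the causal character of $\nabla_a\varphi$ --- appears in your proposal.

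Concretely, your first route stalls at the computation you admit you have not done, and the constraint you predict is wrong: the sheet ($N_{ab}$) part of $\nabla_a\nabla_b\varphi=0$, computed from \eqref{cd1} and \eqref{cd2}, gives the single relation $\dot\varphi\left(\tfrac{2}{3}\theta-\sigma\right)=\hat\varphi\,\phi$, not the pair $\dot\varphi\left(\tfrac{2}{3}\theta-\sigma\right)=0$ and $\hat\varphi\,\phi=0$, so there is no immediate dichotomy to exploit, and you exhibit no mechanism by which the remaining components force $\alpha,\beta$ constant or $\nabla_a\varphi=0$. Your second route fares no better: differentiating \eqref{ckvl2} along $\xi^a$ with $\nabla_b\nabla_a\varphi=0$ yields $\mathcal{L}_{\xi}\left(\xi^2\nabla_a\varphi\right)=2\varphi\,\xi^2\nabla_a\varphi+\xi^2\nabla_a\xi^b\nabla_b\varphi$, which vanishes identically on the horizon where $\xi^2=0$ and therefore produces no inconsistency; the claim that non-nullity of $\nabla_a\varphi$ ``prevents the resulting expression from being consistent unless $\varphi$ is constant'' is exactly the step that needs proof and is not supplied. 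The gap is therefore not one of bookkeeping but of a missing idea: without the curvature decomposition specific to SCKV-admitting spacetimes and the inheritance condition $\mathcal{L}_{\xi}R_{ab}=0$, the argument does not close.
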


While the main purpose here is to check (non)-existence of a CKH in LRS spacetimes, we will not need to make use of the result of Theorem \ref{tht1}. Rather, we will make use of the geometric structure of the spacetimes itself. In verifying Proposition \ref{pro1}, we establish a result with regards to the existence of SCKV. Specifically, Proposition \ref{pro1} follows as a corollary to the following result.

\begin{proposition}\label{pro2}
A LRS spacetime cannot admit a SCKV of the form \eqref{bk4}, if the gradient of its associated conformal divergence is timelike.
\end{proposition}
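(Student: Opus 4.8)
The plan is to turn the defining property of a special conformal Killing vector, \(\nabla_a\nabla_b\varphi=0\), into a statement about a \emph{parallel} vector field: it says exactly that \(V_a:=\nabla_a\varphi\) is covariantly constant, and by hypothesis \(V_a\) is timelike, hence nowhere zero, consistent with the SCKV requirement. Since in an LRS geometry the gradient of a scalar carries no sheet component, \eqref{decn} lets us write \(V_a=-\dot\varphi\,u_a+\hat\varphi\,e_a\) with \(\dot\varphi^2-\hat\varphi^2=-V_aV^a>0\) constant, so in particular \(\dot\varphi\neq0\) everywhere. The argument then reduces to showing that no LRS geometry \eqref{fork} can simultaneously carry such a parallel timelike vector in the \(u\)--\(e\) plane and a CKV \eqref{bk4} whose conformal divergence is that same \(\varphi\).

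First I would decompose \(\nabla_aV_b=0\) in the \(1+1+2\) frame. Differentiating \(V_b=-\dot\varphi\,u_b+\hat\varphi\,e_b\) and substituting \eqref{cd1}, \eqref{cd2} and the commutation rule \eqref{a20}, the vanishing of the \(u_au_b\), \(u_ae_b\), \(e_au_b\), \(e_ae_b\), \(N_{ab}\) and \(\varepsilon_{ab}\) parts yields a closed system: it fixes \(\ddot\varphi\), \(\dot{\hat\varphi}\), \(\hat{\dot\varphi}\), \(\hat{\hat\varphi}\) in terms of \(\mathcal{A}\), \(\tfrac13\theta+\sigma\) and \(\dot\varphi,\hat\varphi\), and forces the algebraic relations \(\dot\varphi\,\mathcal{A}=(\tfrac13\theta+\sigma)\hat\varphi\), \(\dot\varphi(\tfrac23\theta-\sigma)=\hat\varphi\,\phi\) and \(\dot\varphi\,\Omega=\hat\varphi\,\bar\Omega\). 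In parallel, applying the Ricci identity to the parallel one-form \(V_a\) gives \(R_{abcd}V^d=0\); contracting once and inserting the LRS Ricci tensor \eqref{ric1} produces two scalar constraints tying \((\rho,p,Q,\Pi,\Lambda)\) to the ratio \(\dot\varphi/\hat\varphi\) (for instance \(\dot\varphi(\tfrac12\rho+\tfrac32p-\Lambda)+Q\hat\varphi=0\)), while the full tensorial identity \(R_{abcd}V^d=0\), via the LRS forms of the electric and magnetic Weyl tensors, pins down \(\mathcal{E}\) and \(\mathcal{H}\) as well.

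Next I would feed these relations into the covariant CKE \eqref{b17}--\eqref{b20}, differentiating them along \(u^a\) and \(e^a\) to express \(\dot\varphi,\hat\varphi\) through \(\alpha,\beta\) and the kinematics, and then close the system with the LRS evolution and propagation equations for \(\{\mathcal{A},\theta,\sigma,\phi\}\). The expectation is that the resulting overdetermined system admits no solution with \(V_a\) timelike: every branch should force either \(\dot\varphi^2-\hat\varphi^2\le 0\), contradicting timelikeness, or \(\dot\varphi=\hat\varphi=0\), contradicting that \(V_a\) is a nonzero gradient. Proposition \ref{pro2} follows; Proposition \ref{pro1} is then a corollary under the standing non-null assumption, the timelike case being exactly what Proposition \ref{pro2} forbids and the spacelike case following from the same computation with the timelike and spacelike constraints interchanged.

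The main obstacle I anticipate is this last step — genuinely closing the system rather than merely stacking up necessary conditions. The frame identities from \(\nabla_a\nabla_b\varphi=0\) and the matter identities from \(R_{abcd}V^d=0\) are each satisfied \emph{vacuously} by flat, or more generally curvature-reducible, LRS backgrounds, so the real work is to show that the CKE \eqref{b17}--\eqref{b20} together with the LRS field equations are incompatible with those degenerate branches. This is precisely where the non-null character of \(\nabla_a\varphi\) is essential — it is needed to divide by \(\dot\varphi^2-\hat\varphi^2\) at the decisive point — and where the rigid LRS structure, in particular the forms \(E_{ab}=\mathcal{E}(e_ae_b-\tfrac12N_{ab})\) and \(H_{ab}=\mathcal{H}(e_ae_b-\tfrac12N_{ab})\), does the heavy lifting.
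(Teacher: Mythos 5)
Your opening observation is sound: for an SCKV the condition $\nabla_a\nabla_b\varphi=0$ says $V_a=\nabla_a\varphi$ is covariantly constant, and the frame relations you extract from $\nabla_aV_b=0$ (e.g.\ $\dot{\varphi}\,\mathcal{A}=(\tfrac13\theta+\sigma)\hat{\varphi}$) together with $R_{abcd}V^d=0$ from the Ricci identity are all correct necessary conditions. But the proof stops exactly where it would have to begin. The decisive claim --- that the overdetermined system consisting of these identities, the CKE \eqref{b17}--\eqref{b20}, and the LRS field equations has no solution with $V_a$ timelike --- is stated as an ``expectation'' and never derived; you yourself flag this as the main obstacle. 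As written, nothing rules out the degenerate branches. For instance, a metric-product LRS spacetime (the Einstein static universe is one) carries a parallel timelike vector field and satisfies every kinematic and curvature identity you list, so the contradiction must come entirely from the coupling to the conformal Killing equations, which is precisely the computation you do not perform.

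The paper closes the argument with two ingredients of a different character, neither of which appears in your plan. First, it invokes Hall's structural theorem: for a spacetime admitting an SCKV with non-null $\nabla_a\varphi$, the \emph{full} Riemann tensor degenerates to $R_{abcd}=\mathcal{B}Z_{ab}Z_{cd}$ with $Z_{ab}={}^*\!\left(\nabla_a\varphi\nabla_bf-\nabla_b\varphi\nabla_af\right)$ built from the auxiliary gradient $\nabla_af=F_{ab}\nabla^a\varphi$. This is strictly stronger than $R_{abcd}V^d=0$ and, in the $1+1+2$ frame, forces $R_{ab}=\rho N_{ab}$, $Q=0$, $\Pi=2p$, $\Lambda=0$. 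Second, it uses the symmetry-inheritance property of an SCKV, $\mathcal{L}_{\xi}R_{ab}=0$, whose $u^au^b$ component then gives $\rho\varphi=0$ at once; $\rho=0$ is excluded because it would annihilate the (by construction non-vanishing) vector $\nabla_af$, and $\varphi=0$ is the excluded Killing case. If you want to complete your route independently of Hall's theorem, you would need to actually close the system and exhibit the contradiction on every branch; otherwise you should import the curvature degeneracy and the inheritance condition $\mathcal{L}_{\xi}R_{ab}=0$, at which point the frame identities you derived become largely superfluous.
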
 

If the gradient of its associated conformal divergence is null, the spacetime is a generalized pp-wave spacetime , which is not LRS. And if spacelike, the admitted SCKV is necessarily spacelike (see \cite{coley1}). Hence, we are interested in the case for which the gradient is timelike.

It has long been known that the existence of SCKV in a spacetime, imposes geometric restrictions on the spacetime \cite{coley1,hall1}. It is these particular restrictions, related to the character of the Riemann curvature tensor, that we aim to use to establish Proposition \ref{pro2}. The result Proposition \ref{pro2} excludes a large class of spacetimes admitting SCKV, thereby buttressing the known fact of the restrictive nature of these vector fields. 

If a spacetime admits a SCKV \(\xi^a\) with \(\nabla_a\varphi\nabla^a\varphi\neq0\), one can always define a non-vanishing globally gradient vector field, with its associated potential function given by \cite{coley1}

\begin{eqnarray}
\nabla_af=F_{ab}\nabla^a\varphi;\quad f=\xi^a\nabla_a\varphi-\frac{1}{2}\varphi^2,
\end{eqnarray}
where

\begin{eqnarray}
\begin{split}
F_{ab}&=\nabla_{[a}\xi_{b]}\\
&=2Xu_{[a}e_{b]},
\end{split}
\end{eqnarray}
written explicitly in the 1+1+2 covariant form, with the function \(X\) as defined earlier. It is found that, in this case the Riemann and Ricci curvature tensors take the decomposed form \cite{hall1}

\begin{eqnarray*}
R_{abcd}=\mathcal{B}Z_{ab}Z_{cd};\qquad R_{ab}=\mathcal{B}Z^{c}_aZ_{cb},
\end{eqnarray*}
where \(\mathcal{B}\) is a scalar and

\begin{eqnarray*}
Z_{ab}=\  ^*\left(\nabla_a\varphi\nabla_bf-\nabla_b\varphi\nabla_af\right).
\end{eqnarray*}
Thus, the vanishing of \(\mathcal{B}\) imposes that the spacetime is necessarily flat. We explicitly compute

\begin{eqnarray}\label{rict}
R_{ab}=\frac{1}{4}\mathcal{B}\left(\dot{\varphi}\hat{f}-\hat{\varphi}\dot{f}\right)^2N_{ab}.
\end{eqnarray}
That is, surfaces of constant \(t\) and \(r\) fully encode the spacetime curvature. Hence, 

\begin{eqnarray*}
R_{ab}u^b=0=R_{ab}e^b.
\end{eqnarray*}
Comparing with the \(u^a\) and \(e^a\) components of \eqref{ric1}, we find that the following must be true:

\begin{eqnarray*}
\begin{split}
0&=Q,\\
0&=\rho+3p+2\Lambda,\\
0&=\rho-p+2\Pi+2\Lambda,
\end{split}
\end{eqnarray*}
so that \(\Pi=2p\). (The second constraint is the statement that the strong energy condition marginally holds.) Note that, from \eqref{rict}, \(N^{ab}R_{ab}=g^{ab}R_{ab}\). Then, applying this to \eqref{ric1} and using \(\Pi=2p\), imposes the vanishing of the cosmological constant. It then follows that the scalar \(\mathcal{B}\) can be expressed as

\begin{eqnarray}
\mathcal{B}=\frac{4\rho}{\left(\dot{\varphi}\hat{f}-\hat{\varphi}\dot{f}\right)^2},
\end{eqnarray}
so that the spacetime Riemann and Ricci curvature tensors take the simple forms

\begin{subequations}
\begin{align}
R_{abcd}&=\rho \varepsilon_{ab}\varepsilon_{cd},\label{fg0}\\
R_{ab}&=\rho N_{ab}.\label{fg1}
\end{align}
\end{subequations}

From the above calculations, it immediately follows that, if a LRS spacetime \(M\) is to admit a SCKV, the following must hold:

\begin{enumerate}

\item \textit{\(M\) has vanishing heat flux;}

\item \textit{\(M\) has vanishing cosmological constant;}

\item \textit{A non-flat \(M\) cannot be of a perfect fluid type;} 

\item \textit{A non-flat \(M\) cannot be pressure-free; and} 

\item \textit{The Riemann and Ricci curvatures of \(M\) are characterized entirely by the energy density. Thus, the only vacuum LRS solution that could possibility admit a SCKV is the Minkowski spacetime.}

\end{enumerate}

Statement \textit{3.} verifies, for LRS spacetimes, a result obtained by Coley and Tupper \cite{coley1}, which ruled out the existence of SCKV in perfect fluids.

Now, the admittance of a SCKV is equivalent to the symmetry condition

\begin{eqnarray}\label{fg2}
\mathcal{L}_{\xi}R_{ab}=0.
\end{eqnarray}
Substituting \eqref{fg1} into \eqref{fg2}, and taking the \(u^au^b\) component of the resulting tensorial equation, leads to the constraint

\begin{eqnarray}\label{fg3}
\rho\varphi=0.
\end{eqnarray}
The vacuum case, \(\rho=0\), can be ruled out as this will impose \(\nabla_af=0\), contradicting that this is a non-vanishing vector field. Thus, we must have \(\varphi=0\), which we have ruled out, and thereby establishing the result of Proposition \ref{pro2}.

%%%%%%%%%%%%%%%%%%%%%%%%%

\section{Discussion and outlook}\label{dis}

Whenever there is a conformal relationship between a static and a dynamical spacetime, the existence of a black hole in one implies the existence of a black hole in the other. The Killing horizon of the static spacetime maps conformally to a conformal Killing horizon in the dynamical spacetime. The laws of black hole mechanics for the static black hole is then mimicked by the dynamical black hole. 

In this work, the conformal invariance of these laws, specifically that of the constancy of the surface gravity of the horizon, has been used to obtain the criteria for a conformal Killing vector in a locally rotationally symmetric spacetimes to generate a conformal Killing horizon. It is established that, for a conformal Killing vector lying in the plane spanned by the preferred time and spatial directions, the vanishing of either of the components of the vector field is necessary and sufficient for the vector field to generate a conformal Killing horizon. The explicit form of the surface gravity is provided. The assumption that goes into the result is that the divergence of the vector field has no critical points. This provides and alternative characterization of a conformal Killing horizon as the set of critical points of the divergence of the vector field along its orbits. Employing this result, we identify a simple differential criterion on the components of an arbitrary vector field in a LRS II spacetime with non-vanishing expansion, which determines if the vector field is a conformal Killing vector and that it generates a conformal Killing horizon.

The character of the components of the CKV on the horizon and their behavior along the preferred time and spatial directions of the spacetime is elucidated. This provides us with immediate non-existence checks, depending on the signs of the dot and hat derivatives of the components of the vector field.

An immediate generalization away from a homothetic Killing vector is a special conformal Killing vector, which has been extensively studied in the literature. For these vector fields the antisymmetric part of their gradient is non-vanishing. Naturally, we were interested in whether a special conformal Killing vector generates a conformal Killing horizon. (In fact, the vanishing of the antisymmetric part of its gradient is seen to impose that on a conformal Killing horizon, the gradient of the divergence of the vector field is zero, which we have ruled out.) We answered this problem in the negative by demonstrating that, in fact, LRS spacetimes do not admit a special conformal Killing vector if the gradient of the divergence of the CKV is timelike. (If the gradient is null, the spacetime must be a generalized pp-wave, which is not LRS, and a spacelike gradient results in a spacelike special conformal Killing vector, which is not of interest to horizon formation.) The argument is purely geometric, and relies on certain restrictions on the geometry of spacetimes admitting certain vector field. Specifically, the Riemann curvature tensor can be written as a tensor product of bivectors, scaled by some regular function. In the LRS case, the curvature is completely characterized by the local energy density of the spacetime. We then make use of the fact that the SCKV is a symmetry of the Ricci tensor, imposing the constraint that a coupling of the energy density and the divergence of the SCKV must vanish, in which case, of course, the energy density has to vanish since we are not dealing with the Killing case. The vacuum case, however, is ruled out since this would imply the vanishing of the antisymmetric part of the gradient of the special conformal Killing vector, which is ruled out. This result further establishes the very `special' and restrictive nature of spacetimes admitting a special conformal Killing vector. 

As was mentioned in the introduction, a work by Chatterjee and Ghosh \cite{gh2} have considered conformal Killing whose foliation surfaces are distorted spheres. This was done via the introduction of rotation on the horizon. The 1+1+2 formulation used in this work extends to cases where the surfaces may not be exact spheres, and hence distorted. These general spacetimes can be realized as perturbations of LRS background solutions, and hence for future considerations, we aim to examine the criteria for the existence of rotating quasi-local conformal Killing horizons in a general 1+1+2 spacetime.

\section*{Acknowledgements}

The author gratefully acknowledges Miok Park of the IBS Center for Theoretical Physics of the Universe, for useful discussions during the early stages of this work. This research was supported by the Basic Science Research Program through the National Research Foundation of Korea (NRF) funded by the Ministry of education (grant numbers) (NRF-2022R1I1A1A01053784) and (NRF-2021R1A2C1005748).

\end{document}